\documentclass[a4paper,11pt]{article}
\usepackage[titletoc]{appendix}

\usepackage[utf8]{inputenc}
\usepackage{amsfonts}
\usepackage{amsmath}
\usepackage{amssymb}
\usepackage{amsthm}
\usepackage{url}
\usepackage[english]{babel}
\usepackage{verbatim}
\usepackage{graphicx}
\usepackage{enumerate}

\usepackage{hyperref}

\makeatletter
\def\namedlabel#1#2{\begingroup
    #2%
    \def\@currentlabel{#2}%
    \phantomsection\label{#1}\endgroup
}
\makeatother

\linespread{1.3}

\addtolength{\hoffset}{-2cm}
\textwidth 16.3cm
\addtolength{\textheight}{1.7cm}
\topmargin -0.1cm

\numberwithin{equation}{section}

\setlength{\parskip}{.2cm}

\def\argmax{{\rm{arg}}\max}

\def\defto{\buildrel def\over =}

\def\esssup{\hbox{ess sup}}
\def\half{\frac{1}{2}}

\def\mp{\mu_+}
\def\mm{\mu_-}
\def\sp{\sigma_+}
\def\sm{\sigma_-}

\def\g{{g_{\epsilon}}}

\def\cL{{\mathcal L}}
\def\cLp{{\cL^+}}
\def\cLm{{\cL^-}}

\def\tC{{\mathbf{C}}}
\def\C{\tC}

\def\F{{\mathcal F}}

\def\R{{\mathbb R}}

\def\E{{\mathbb E}}

\def\val{v}
\def\Val{{\mathbb V}}
\def\snell{{\mathbf V}}
\def\s2{{\mathbf U}}
\def\gain{{g}}
\def\Gain{{\mathbf G}}
\def\vg{{u}}

\def\inclm{{\phi^{L,M}}}
\def\declm{{\psi^{L,M}}}
\def\tmh{{\tau^{\hat m}_{H}}}
\def\hm{{\hat m}}

\def\ap{{\alpha_+}}
\def\am{{\alpha_-}}
\def\bp{{\beta_+}}
\def\bm{{\beta_-}}
\def\fp{{h_+}}
\def\fm{{h_-}}

\def\k{{f_{\kappa}}}
\def\di{{D}}

\def\he{the trader  }
\def\He{The trader}
\def\his{their }

\def\endpf{{\hfill$\diamond$}}

\theoremstyle{plain}
\newtheorem{theorem}{Theorem}[section]
\newtheorem{lemma}[theorem]{Lemma}

\newtheorem{example}[theorem]{Example}
\theoremstyle{remark}
\newtheorem{remark}{Remark}
\theoremstyle{definition}

 \title{An Optimal Stopping Problem Modeling Technical Analysis}
\usepackage{authblk}

\author[1,2]{Saul Jacka\thanks{Corresponding author. Saul Jacka gratefully acknowledges funding received from the EPSRC grant EP/P00377X/1 and is also grateful to the Alan Turing Institute for their financial support under the EPSRC 
grant EP/N510129/1.}
\thanks{Both authors would like to thank the participants in the 3rd Conference on Mathematical Economics and Finance at the Centre for Mathematical Research in Economics and Finance for many helpful comments}
\thanks{ E-mail: \textit{s.d.jacka@warwick.ac.uk}}}
\author[1]{Jun Maeda \thanks{E-mail: \textit{j.maeda@warwick.ac.uk}\\Department of Statistics\\University of Warwick\\Coventry CV4 7AL, UK}}

\affil[1]{University of Warwick}
\affil[2]{Alan Turing Institute}

\begin{document}
\maketitle
   \begin{abstract}
We present a solution to an optimal stopping problem for a process with a wide-class of novel dynamics. The dynamics model the support/resistance line concept from financial technical analysis.
\end{abstract}

Keywords: {technical analysis; optimal stopping problem; resistance level; support line}

Subject classifications: 

JEL: G11; C61; D53; D91

MSC: 60G40; 91B24; 91G80

\section{Introduction and motivation}
\label{section: Introduction}
Many financial traders base their trading strategies on technical analysis (TA). The analysis relies heavily on the visual shape of historical price graphs (which traders call \lq charts') to determine whether the asset is a good buy or not. One of the basic modes of analysis in the field is that of a support and resistance line. In this method, the traders obtain a horizontal line called a support (respectively, resistance) line that they believe is a local support (ceiling) of the asset price. The belief is that if the stock price crosses a support line from above and goes lower than the level by a significant amount, then the stock has moved to a  regime with negative outlook, in which case traders should sell, or, at least, not be long of the stock. Conversely, if the asset price spikes up, crossing a resistance line from below, the asset is considered to have shifted to a regime with positive outlook and the traders should buy the asset or at least cover any short position.

We note here that the support/resistance level is not a hard limit. So, the stock can go lower (higher) than the support (resistance) level without the regime changing and is expected to come back up (down) to the relevant side in a short period of time. We also note that there may be several support/resistance levels in one chart. 

Studies on TA have been performed, but they mainly focus on how to detect the sign of the regime transition as quick as possible and  on checking against historical data (usually by means of computation and statistics) the usefulness of adopting TA in trading. Some examples of research that focus on these points are \cite{bdgtt07} and \cite{lmw00}. We know of no literature attempting to model and justify TA methods mathematically.

In this paper,  we assume that there are only two regimes for the stock price. Under these regimes the price follows different diffusion dynamics (which we will, from time to time, specialise to  log-normal dynamics\footnote{An earlier version of this paper appeared on arXiv and only considered the log-normal case.}). This introduces a novel class of Markovian price dynamics. We then define criteria for buying/selling the stock by solving two optimal stopping problems: the first being the optimal selling problem and the second the buying problem. Of course, we suppose that purchase precedes selling temporally but this means, since we need to solve the problems iteratively, that we have first to solve the selling problem in order to specify the buying problem correctly.

One of the features that makes our model special is that the two regimes are not spatially distinct, i.e. there is a region 
where the stock price can be in either of the two regimes. This feature provides some ``room'' for the process in each 
regime to move around the support/resistance level without switching to the other regime.

The rest of the paper is organized as follows: Section~\ref{section: setup} presents the setup we use for the model with a support/resistance level. We solve for the optimal selling time (which is of stop-loss type) that maximizes the expected discounted price in Section 3. In Section~\ref{section: buying}, we solve the optimal stopping problem for purchasing the shares. We make some suggestions for further modelling in Section~\ref{section: conclusions}.

\section{The price model and the selling problem}
\label{section: setup}
\subsection{The model}
We assume that there are price levels $L$ and $H$ ($0<L<H$) at which the regimes change. The positive regime corresponds to the interval $[L, \infty)$ and the negative regime to  $(0, H]$.  We may think of the support/resistance level as situated somewhere in $(L,H)$, say at $\frac{L+H}{2}$.

The dynamics for the stock price are expressed by the following SDEs, one for each regime:

\begin{equation}\label{stock}
\begin{cases}
dS^+_t = \mu_{+} (S^+_t)dt + \sigma_{+}(S^+_t)dW_t \quad \text{in the positive regime},\\
dS^-_t = \mu_{-} (S^-_t)dt + \sigma_{-}(S^-_t)dW_t \quad  \text{in the negative regime}, 
\end{cases}
\end{equation}
where  $\sigma_{+}$, $\sigma_{-}$, $\mu_{+}$, $\mu_{-}$ are Holder-continuous functions with $\sp$ and $\sm$ strictly positive away from 0,  and $W_t$ is a one dimensional Brownian motion. Consistent with the modeling of a stock {\em price}, we assume that zero is either absorbing or inaccessible for a  process following  the negative-regime dynamics.We denote the associated differential operators (the restriction of the associated infinitesimal generators to $\C^2$ functions) by $\cLp$ and $\cLm$, so that
$$\cLp:f\mapsto \half \sp^2 f''+\mp f'\text{ and }\cLm:f\mapsto \half \sm^2 f''+\mm f'.
$$

Let $r$ denote the risk-free interest rate, then we assume that
\begin{equation}\label{cond}
\mu_{-}(x)  \leq rx \leq \mu_{+}(x) \text{ and }r\geq 0.
\end{equation}

We define the c\`adl\`ag flag-process $F_t$  taking values in $\{-, +\}$ as

\begin{equation}
F_t = 
\begin{cases}
+ \quad \text{when the stock is in the positive regime}\\
- \quad \text{in the negative regime}.
\end{cases}
\end{equation}
Thus
$F_t$ jumps from one value to the other only in the following cases:

\begin{equation}
\begin{cases}
\text{if }F_{t-} =+ \text{ and } S_t = L \text{, then } F_t = -\\
\text{if }F_{t-} = - \text{ and } S_t = H \text{, then } F_t = +,
\end{cases}
\end{equation}
and the stock price satisfies
\begin{equation}\label{stock2}
dS_t=\mu_{F_t}(S_t)dt+\sigma_{F_t}(S_t)dW_t.
\end{equation}
Notice that the separation of $L$ and $H$ ensures that $F$ only has finitely many jumps on any finite time-interval and that the pair $(S_t,F_t)$ clearly is a Feller process.

We further assume that the trader {\em always sells \his shares at the level $M\geq H$}, either because  \he is happy to take profit or is required to do so by \his manager. 

\begin{remark}
It is not difficult to understand why a professional trader should take a profit at some level $M$, since \he needs to choose an investment from a collection of stocks. If \he wants to pick one from the group, not only does \he compare the possible losses, but also the possible profits in investing in the stocks and will normally wish to close-out sufficiently profitable positions. 
\end{remark}

Note that the two regimes have non-empty intersection, $[L, H]$, of their domains and condition \eqref{cond} implies that the discounted price process has supermartingale dynamics in the negative regime and submartingale dynamics in the positive regime.
\subsection{Some notation and further assumptions}
We denote generic exit/entrance times by $\tau$ so that 
\begin{itemize}
\item the first entrance by $S$ to the set $I$ is denoted $\tau_I$: $\tau_I=\inf\{t:\; S_t\in I\}$ 
\item similarly, the first entrance by $S$ to the set $I$, when the flag value is $f$, is denoted $\tau_{I,f}$: $\tau_{I,f}=\inf\{t:\; S_t\in I\text{ and }F_t=f\}$ 
\end{itemize}
We recall that there is a unique in law solution to the stock-price dynamics equations (\ref{stock}) and hence to (\ref{stock2}) (\cite{ks98}). Moreover, there are unique fundamental solutions on $\R_+$, which we denote $\phi_+$, $\psi_+$, respectively $\phi_-$, $\psi_-$  to the ODEs
\begin{equation}\label{fund}\cLp f-rf=0
\end{equation}
and 
\begin{equation}\cLm f-rf=0
\end{equation} 
satisfying:

\begin{equation}
\phi_+(0)=\phi_-(0)=0;
\end{equation}
\begin{equation}\phi_+(H)=\phi_-(H)=1;
\end{equation}
\begin{equation}\psi_+(L)=\psi_-(L)=1;
\end{equation}
\begin{equation}\label{fund2}
\lim_{x\rightarrow\infty}\psi_+(x)=\lim_{x\rightarrow\infty}\psi_-(x)=0
\end{equation}
(see, for example, \cite{Friedman}).

\subsection{The selling problem} If we assume that the trader already holds the stock, then if they wish to maximise their expected discounted profit they will seek the optimal time to sell. So they will seek a stopping time $\tau$, bounded by $\tau_{[M,\infty)}$, which achieves
$$
\sup_{\tau\leq \tau_{[M,\infty)}}\E[e^{-r\tau}S_\tau].
$$
For each initial price $x\in(0,M]$ and flag-value $f\in\{+,-\}$, we define
\begin{equation}\label{prob1}
\snell(x,f)=\sup_{\tau\leq \tau_{[M,\infty)}}\E_{x,f}[e^{-r\tau}S_\tau].
\end{equation}

\begin{remark}If, rather than condition (\ref{cond}), we suppose that $\mu_-(x)\leq \mu_+(x)\leq rx$ then the discounted stock price is actually a supermartingale bounded above by $M$ and it follows immediately from the Optional Sampling Theorem that it is always optimal to sell the stock immediately. Conversely, if $rx\leq \mu_-(x)\leq \mu_+(x)$ then the discounted stock price is actually a submartingale bounded above by $M$ and now it is always optimal (by the Optional Sampling Theorem) to wait until the last possible time, $\tau_{[M,\infty)}$, before selling.
\end{remark}
In the case where (\ref{cond}) holds, so the two drifts sandwich the return from a risk-free asset, we expect the possibility of an earlier sale. As we shall see, the trader should never sell in the positive regime unless the stock price has attained level $M$ (thus earlier sale always corresponds to a ``stop-loss'' action). 

We will show that the optimal action is to sell at $\tau^{\hat m}_M\defto \tau_{[M,\infty)}\wedge \tau_{[0,\hat m],-}$, the earlier of $\tau_{[M,\infty)}$ and $\tau_{[0,\hat m],-}$, for some ${\hat m}\leq H$. 

\subsection{ The form of the solution}
Our analysis will consider decreasing possible values, $m$, for the selling boundary $\hat m$ and will divide into three cases:  
\begin{enumerate}[(C1)]
\item where $\hat m\in [L, H]$, 
\item where $\hat m\in(0,L)$ and 
\item where $\hat m=0$. 
\end{enumerate}
In case C1, notice that if $F_0=+$ then the stock will be sold as soon as the price hits $L$. Recalling that the optimal future payoff (Snell envelope) for an optimal stopping problem is a martingale up until the (last) optimal stopping time, it follows that for such a value of $m$,
$e^{-r(t\wedge \tau_{[L,M]^c,+})}\snell(S_{t\wedge \tau_{[L,M]^c,+}},F_{t\wedge \tau_{[L,M]^c,+}})$ should be a martingale with
\begin{equation}\label{bound1p}
\snell(L,+)=L\text{ and }\snell(M,+)=M.
\end{equation}
Similarly, if $F_0=-$, $e^{-r(t\wedge \tmh)}\snell(S_{t\wedge \tmh} ,F_{t\wedge \tmh})$  should be a martingale  with
\begin{equation}\label{bound1m}
\snell(H,-)=\snell(H,+)\text{ and }\snell(m,-)=m,
\end{equation}
for the optimal choice of $m$.

In case C2, if $F_0=+$,  $e^{-r(t\wedge \tau_{[L,M]^c,+})}\snell(S_{t\wedge \tau_{[L,M]^c,+}},F_{t\wedge \tau_{[L,M]^c,+}})$ should again be a martingale, but now with boundary conditions
\begin{equation}\label{bound2p}
\snell(L,+)=\snell(L,-)\text{ and }\snell(M,+)=M
\end{equation}
and
\begin{equation}\label{bound2m}
\snell(H,-)=\snell(H,+)\text{ and }\snell(m,-)=m;
\end{equation}
while in case C3 the requirements are the same as in case C2, with $\hm=0$, the boundary condition at 0 corresponding either to the inaccessibility of 0 or to the the fact that it is an absorbing boundary.

Standard arguments (see e.g. \cite{ks98} etc.)  tell us that the unique solution to the first characterisation is given by $v(x,f,m)$ where $v(x,+,m)$ satisfies
\begin{equation}\label{odep}
\cLp v-rv=0
\end{equation} 
and with boundary conditions (\ref{bound1p}) and (then) $v(x,-,m)$ satisfies
\begin{equation}\label{odem}
\cLm v-rv=0
\end{equation} 
with boundary conditions (\ref{bound1m}).
In a similar fashion, by taking  $v(H,+,m)=\E_{H,+}[e^{-r\tau^{\hat m}_M}S_{\tau^{\hat m}_M}]$ and $v(L,-,m)=\E_{L,-}[e^{-r\tau^{\hat m}_M}S_{\tau^{\hat m}_M}]$ and solving (\ref{odep}) and (\ref{odem}) with these values in (\ref{bound2p}) and (\ref{bound2m}) we obtain the unique solution to the second characterisation.

For the third case, the usual arguments show that, $\phi_-$ given by (\ref{fund})-(\ref{fund2}) satisfies
\begin{equation}\label{phi}\phi_-(x)=\E_{x,-}[e^{-r\tau_H}].
\end{equation}

Similarly, if we define $\inclm$ and $\declm$ by 
$$
\inclm(x)=\E_{x,+}[e^{-r\tau_{[M,\infty)}}1_{(\tau_{[L,M]^c,+}=\tau_{[M,\infty)})}]\text{ and }\declm(x)= \E_{x,+}[e^{-r\tau_{[M,\infty)}}1_{(\tau_{[L,M]^c,+}=\tau_{[0,L],+})}],
$$
then $v(x,-,0)=\phi_-(x)v(H,+,0)$ and
$$
v(H,+,0)=M\inclm(H)+\val(L,-)\declm(H),
$$
so that 
$$
v(H,+,0)=\frac{M\inclm(H)}{1-\phi_-(L)\declm(H)}.
$$

 Consequently, if we take 
 \begin{equation}\label{disp}
 v(x,-,0)=\phi_-(x)\frac{M\inclm(H)}{1-\phi_-(L)\declm(H)}
 \end{equation}
 and $v(x,+,0)$ as the unique solution to (\ref{odep}) with 
 boundary conditions (\ref{bound2p}), we obtain the unique solution to the third characterisation.
 
 \subsection{Identifying the stop-loss boundary}
 To identify the stop-loss boundary, $\hm$, we consider the argument we will make to show that our proposed solution is optimal. We will adopt the usual techniques, using the characterisation of the Snell envelope which works as follows: we will show that our candidate solution $\val$ has the following properties:
 \begin{enumerate}[(P1)]
 \item $\Val_t\defto e^{-rt}\val(S_{t\wedge \tau_{[M,\infty)}},F_{t\wedge \tau_{[M,\infty)}})$ is a class D supermartingale;
 \item $\Val_t$ dominates the gains process $e^{-rt}S_{t\wedge \tau_{[M,\infty)}}$;
 \item there exists a stopping time $\hat\tau\leq \tau_{[M,\infty)}$ such that $\Val_0=\E[e^{-r\hat\tau}S_{\hat\tau}]$.
 \end{enumerate}
This is sufficient to show that $\val$ is the optimal solution and that $\hat\tau$ is an optimal stopping time.

Now, to show this we'll characterise $\hat m$ as follows: in cases C1 and C2 we'll take $\hat m$ to be the unique choice of $m$ for which 
\begin{equation}
\frac{\partial \val}{\partial x}(m,-,m)=1,
\end{equation} while in the third case we'll show that $\frac{\partial \val}{\partial x}\geq 1$ for all $x>0$ (see Theorem \ref{sell}).

We'll establish that this is sufficient, since if we define $g(x,f)\defto \val(x,f)-x$ then the strong maximum and strong minimum principles will tell us that $g\geq 0$, while smooth pasting at $\hat m$ (between $\val$ and the identity) gives us that $\Val$ is a supermartingale, while it is clearly class D since bounded (by $M$).

Case C3 is similar except that we don't need smooth pasting when $\hat m=0$.


\section{Establishing the solution of the selling problem}
\subsection{The general case}
\begin{theorem}\label{sell}
Under the model outlined in section 2, there are three possibilities corresponding to cases C1 to C3; so exactly one of the following holds:
\begin{itemize}
\item[]letting $\val_1(x,f)$ be the solution to the linked ODEs (\ref{odep}) and (\ref{odem}) with boundary conditions (\ref{bound1p})
and (\ref{bound1m}); then 
\item[(S1)]{\em either} there exists $\hat m\in[L,H]$ such that $\frac{\partial \val_1}{\partial x}(x,-;m)|_{x=m=\hat m}=1$
\item[]{\em or}

\item[]  for every $m\in [L,H],$ $\frac{\partial \val_1}{\partial x}(x,-;m)|_{x=m}>1$.
\item[] In the latter case,  defining $\val_2(x,f;m)$ to be the solution to the ODEs (\ref{odep}) and (\ref{odem}) with boundary conditions (\ref{bound2p})
and (\ref{bound2m}), 
\item[(S2)]{\em either} there exists an $\hat m\in(0,L)$ such that $\frac{\partial \val_2}{\partial x}(x,-;m)|_{x=m=\hat m}=1$ 
\item[]{\em or}
\item[(S3)]  for every $m\in (0,L)$, $\frac{\partial \val_2}{\partial x}(x,-;m)|_{x=m=}>1$ 
\end{itemize}
\end{theorem}
\begin{proof} We claim that $\hat m$  is well-defined, so that the three cases are exhaustive.

First, define $\phi_-$ as in (\ref{phi});  $\psi_-$ as the unique (and decreasing) solution on $(0,H)$ of $\cL^- f=0$ with $\psi_-(H)=0$ and $\psi_-(L)=1$; $\phi_+$ as the unique, increasing, solution of $\cL^+ f=0$ with $\phi_+(M)=1$ and $\phi_+(L)=0$; and $\psi_+$ as the unique, decreasing solution of  $\cL^+ f=0$ with $\psi_+(M)=0$ and $\psi_+(L)=1$. 
Our solutions  $\val_1$, $\val_2$ and $\val_3$ to (\ref{odep}) and (\ref{odem}) will be of the form
$\val(x,f;m)=A_f(m)\phi_f+B_f(m)\psi_f$ where, setting $C(m)=(A_-(m),B_-(m),A_+(m)B_+(m))^T$, for suitable choices of the coefficients, it follows from the boundary conditions that

\begin{equation}
N(m)C(m)=(m,m-L,0,M)^T
\end{equation}
in  case S1
and
\begin{equation}
\tilde N(m)C(m)=(m,0,0,M)^T
\end{equation}
in cases S2 and S3, where
\begin{equation}
N(m)=\begin{pmatrix}
\phi_-(m)&\psi_-(m)&0&0\\
\phi_-(m)&\psi_-(m)&\phi_+(L)&\psi_+(L)\\
\phi_-(H)&\psi_-(H)&\phi_+(H)&\psi_+(H)\\
0&0&\phi_+(M)&\psi_+(M)\\
\end{pmatrix}
\end{equation}
and
\begin{equation}
\tilde N(m)=\begin{pmatrix}
\phi_-(m)&\psi_-(m)&0&0\\
\phi_-(L)&\psi_-(L)&\phi_+(L)&\psi_+(L)\\
\phi_-(H)&\psi_-(H)&\phi_+(H)&\psi_+(H)\\
0&0&\phi_+(M)&\psi_+(M)\\
\end{pmatrix}.
\end{equation}
It follows fairly easily from the Implicit Function Theorem that $\frac{\partial v}{\partial x}(x,-;m)$ is jointly continuous in $(x,m)$.

Now take $L<m<H$, so that $\val(x,+;m)\geq x$ on $[L,M]$ and so $\val(m,-;m)=m$ and $\val(H,-;m)=\val(H,+;m)\geq H$. This implies that there is a $\theta\in(0,1)$ such that  $\frac{\partial v}{\partial x}(m+\theta (H-m),-;m)\geq 1$ and letting $m\uparrow H$ we see that 
 $\frac{\partial v}{\partial x}(H,-;H)\geq 1$. Then the result follows by continuity.
\end{proof}

\begin{theorem}\label{sell2}
For the three cases:
\begin{itemize}
\item In case S1, 
\begin{equation}
\val(x,f)=\begin{cases}
\val_1(x,f;\hat m):\; x\in[\hat m, H], f=-\text{ or }x\in [L,M],f=+\\
x:\; x\in[0,\hat m), f=- \text{ or }x\in[0,L], f=+;
\end{cases}
\end{equation}

\item In case S2,
\begin{equation}
\val(x,f)=\begin{cases}
\val_2(x,f;\hat m):\; x\in[\hat m, M]\\
x:\; x\in[0,\hat m);
\end{cases}
\end{equation}

\item In case S3,
\begin{equation}
\val(x,f)=\val_3(x,f)
\end{equation}
where $\val_3$ is the solution to the ODEs (\ref{odep}) and (\ref{odem}) with boundary conditions (\ref{bound2p})
and satisfying (\ref{disp}).
\end{itemize}
\end{theorem}
We outline the proof here, relegating some details to the appendix.
\begin{proof}First we require
\begin{lemma}\label{snell}
Suppose our candidate optimal value function, given by the trichotomy S1-S3, and denoted by $\val$, satisfies properties P1-P3 then it is optimal
\end{lemma}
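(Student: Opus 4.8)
The plan is to recognise Lemma~\ref{snell} as the standard verification (or ``martingale'') characterisation of the Snell envelope and to read off optimality directly from P1--P3. Writing $\snell(x,f)=\sup_{\tau\leq\tau_M}\E_{x,f}[e^{-r\tau}S_\tau]$ for the true value function, I would establish $\val=\snell$ by proving the two inequalities separately, and would simultaneously identify $\hat\tau$ as an optimal stopping time.

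For the upper bound $\val\geq\snell$, I would fix an arbitrary admissible $\tau\leq\tau_M$ and apply the optional sampling theorem to the class D supermartingale $\Val_t=e^{-rt}\val(S_{t\wedge\tau_M},F_{t\wedge\tau_M})$ supplied by P1. Starting from $S_0=x$, $F_0=f$, so that $\Val_0=\val(x,f)$ under $\P_{x,f}$, optional sampling gives $\val(x,f)=\Val_0\geq\E_{x,f}[\Val_\tau]=\E_{x,f}[e^{-r\tau}\val(S_\tau,F_\tau)]$. Invoking the domination property P2, namely $\val(S_\tau,F_\tau)\geq S_\tau$ pointwise, yields $\val(x,f)\geq\E_{x,f}[e^{-r\tau}S_\tau]$, and taking the supremum over all $\tau\leq\tau_M$ gives $\val(x,f)\geq\snell(x,f)$.

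For the reverse inequality and for attainment I would use P3: there is a stopping time $\hat\tau\leq\tau_M$ with $\val(x,f)=\Val_0=\E_{x,f}[e^{-r\hat\tau}S_{\hat\tau}]$. Since $\hat\tau$ is itself admissible, $\E_{x,f}[e^{-r\hat\tau}S_{\hat\tau}]\leq\snell(x,f)$, whence $\val(x,f)\leq\snell(x,f)$. Combining the two inequalities gives $\val=\snell$, and the same chain shows that the supremum is attained at $\hat\tau$, so $\hat\tau$ is optimal.

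The only point that requires genuine care is the application of optional sampling at possibly unbounded stopping times, since $\tau_M$ may be infinite when the price never reaches $M$. This is exactly what the class D hypothesis in P1 is for: it makes the family $\{\Val_\tau:\tau\text{ a stopping time}\}$ uniformly integrable, so the supermartingale inequality $\Val_0\geq\E[\Val_\tau]$ persists for all stopping times, including on $\{\tau=\infty\}$, where, using $r>0$ together with the a.s.\ supermartingale limit, the contribution behaves as $e^{-r\tau}S_\tau=0$. In fact, since $0\leq S_{t\wedge\tau_M}\leq M$ and $0\leq e^{-rt}\leq 1$, the process $\Val_t$ is uniformly bounded, so class D holds automatically and no delicate integrability estimate is needed; the substantive work lies not in this lemma but in exhibiting a $\val$ that genuinely satisfies P1--P3, which is the content of the case analysis C1--C3.
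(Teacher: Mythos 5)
Your proof is correct and follows essentially the same route as the paper: optional sampling for the class D supermartingale $\Val$ (P1) combined with the domination property P2 gives $\val\geq\snell$, and attainment via P3 gives the reverse inequality and the optimality of $\hat\tau$. Your extra remark on handling possibly unbounded stopping times via boundedness of $\Val$ is a sensible clarification but does not change the argument.
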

This is standard, see the appendix for the proof.

So we seek to prove that $\val$ has properties P1-P3.

To prove P3: notice that each $\val_i$ is continuous on $[0,M]$ so bounded and by the usual arguments is
actually given by $\E_{x,f}[e^{-r\tau^{\hm}_M})S_{\tau^{\hm}_M}]$, 
so that $\hat \tau=\tau^{\hm}_M$.

To prove P1: since $\Val$ is bounded it is definitely of class D. Since $L$ is strictly less than $H$, we have, from the It\^o-Tanaka formula,
\begin{eqnarray}\label{super}
d\Val_t=1_{(t< \tau_{[M,\infty)})}e^{-rt}\biggl[\biggl(&-r\val(S_t,F_t;\hat m)+\cL^+\val(S_t,+)1_{(F_t=+)}+L^-\val(S_t,+)1_{(F_t=-)}\biggr)dt\nonumber\\
&+dM_t+\bigl(\frac{\partial \val}{\partial x}(\hat m, -;\hat m)-\frac{\partial \val}{\partial x}(\hat m -,-;\hat m)\bigr)dl^{\hat m}_t\biggr],
\end{eqnarray}
where $M$ is a continuous local martingale, $l^{\hat m}$ is the local time of $S$ at $\hat m$, and the last term disappears in case C3 since $0$ is either absorbing or inaccessible.
Now since $\val(x,-;\hat m)=x$ for $x\leq \hat m$ and since (in cases C1 and C2) we have imposed the condition that $\frac{\partial \val}{\partial x}(\hat m, -;\hat m)=1$ we see that the local time term in (\ref{super}) disappears. Then, thanks to (\ref{odep}) and (\ref{odem}) the other bounded variation terms in (\ref{super}) disappear when $F_t=+$ and when $F_t=-$ and $S_t\geq \hat m$, so we are left with
\begin{equation}
d\Val_t=1_{(t< \tau_{[M,\infty)})}e^{-rt}\biggl[(\mu_-(S_t)-rS_t)1_{(F_t=-,S_t<\hat m)}dt+dM_t\biggr],
\end{equation}
so that $\Val$ is a local supermartingale (since $\mu_-(x)\leq rx$). Then, since $\val$ is bounded it follows that $\Val$ is a supermartingale as required.
Finally, to establish P2: we need to prove that $\val\geq x$.
This is Lemma \ref{dom} in Appendix \ref{app}.
\end{proof}

\subsection{The lognormal case}
In this subsection, we assume that the price dynamics are lognormal in each regime so that
\begin{equation}\label{log}
\sp(x)=\sp x;\; \sm(x)=\sm x;\; \mp(x)=\mp x;\; \mm(x)=\mm x,
\end{equation}
with
$$\sp>0;\; \sm>0; \text{ and }\mm<r<\mp.
$$
The corresponding fundamental solutions of (\ref{odep}) and (\ref{odem}) are given by $x^\ap$, $x^{\bp}$ and $x^\am$, $x^\bm$ respectively, where $\ap$ and $\bp$ are the roots of $\fp(t)\defto\half\sp^2t^2+(\mp-\half\sp^2)t-r=0$ and $\am$ and $\bm$ are the roots of  $\fm(t)\defto\half\sm^2t^2+(\mm-\half\sm^2)t-r=0$, ordered so that $\beta_{\pm}<\alpha_{\pm}$.
\begin{lemma}
If the price dynamics correspond to (\ref{log}) then case C3 does not occur.
\end{lemma}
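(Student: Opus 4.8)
The plan is to argue by contradiction: assume we are in case C3, write down the candidate value function $\val_3$ supplied by (\ref{disp}), and show that in the lognormal regime its derivative at the origin is incompatible with the defining feature of C3, namely that $\frac{\partial \val}{\partial x}(\cdot,-)\geq 1$ on all of $(0,\infty)$ (equivalently, that $\val_3(\cdot,-)$ dominates the identity, which is property P2). Since that property is exactly what is needed for $\val_3$ to be optimal, its failure excludes C3 and leaves only S1 or S2.

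First I would make the relevant fundamental solution explicit. As the fundamental solutions of $\cLm v-rv=0$ are $x^{\am}$ and $x^{\bm}$, the increasing solution of (\ref{phi}) normalised by $\phi_-(0)=0$ and $\phi_-(H)=1$ is $\phi_-(x)=(x/H)^{\am}$, where $\am$ is the positive root of $\fm$. The crux is then to locate $\am$ relative to $1$. Because $r>0$, the product of the roots of $\fm$ equals $-2r/\sm^2<0$, so the roots straddle zero and $\am>0>\bm$; moreover $\fm(1)=\half\sm^2+(\mm-\half\sm^2)-r=\mm-r<0$ by the standing assumption $\mm<r$. Since $\fm$ is an upward-opening parabola, negative precisely between its two roots, $\fm(1)<0$ forces $1$ to lie strictly between $\bm$ and $\am$, whence $\am>1$.

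With this in hand the contradiction is immediate. In case C3 the candidate is $\val_3(x,-)=\phi_-(x)\,\Val(H,+)=(x/H)^{\am}\,\Val(H,+)$, where $\Val(H,+)>0$ (the ratio in (\ref{disp}) is positive and finite, its numerator $M\inclm(H)$ being positive and its denominator $1-\phi_-(L)\declm(H)$ being positive because $\phi_-(L)=(L/H)^{\am}<1$ and $\declm(H)\leq 1$). Hence
$$
\frac{\partial \val_3}{\partial x}(x,-)=\frac{\am}{H}\,\Val(H,+)\Bigl(\frac{x}{H}\Bigr)^{\am-1}\longrightarrow 0\quad\text{as }x\to 0^+,
$$
since $\am-1>0$. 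In particular $\frac{\partial \val_3}{\partial x}(x,-)<1$, and so $\val_3(x,-)<x$, for all sufficiently small $x>0$, contradicting the requirement $\frac{\partial \val}{\partial x}\geq 1$ on $(0,\infty)$ that characterises C3. I expect the only real obstacle to be the logical bookkeeping rather than the computation: one must confirm that $\Val(H,+)$ is genuinely positive and finite (so that the sign and vanishing of the derivative are governed purely by the exponent $\am$), and verify that the failure of $\frac{\partial \val_3}{\partial x}\geq 1$ near $0$ is precisely the failure of P2 for the C3 candidate, which is what licenses the conclusion that case C3 cannot arise under (\ref{log}).
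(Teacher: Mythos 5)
Your proposal is correct and follows essentially the same route as the paper's own proof: both hinge on showing $\bm<0<1<\am$ from $\fm(0)<0$ and $\fm(1)=\mm-r<0$, writing $\val_3(x,-)=c\,x^{\am}$ with $c>0$ via $\phi_-(x)=(x/H)^{\am}$, and observing that the derivative vanishes at the origin, which is incompatible with the defining requirement of case C3. Your additional bookkeeping (positivity and finiteness of $\Val(H,+)$, and the explicit link to property P2) merely fills in details the paper leaves implicit.
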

\begin{proof}
Notice that  $\fm(0)=-r\leq0$ and $\fm(1)=\mm-r<0$. Since $\sm^2>0$ it follows that $\bm\leq 0<1<\am$.

Consequently $\phi_-(x)=(\frac{x}{H})^\am$ and $\val_3(x,-;0)=cx^\am$ for some positive constant $c$. Thus 
$\frac{\partial \val_3}{\partial x}(0,-;0)=0$ and so $\hat m>0$.

\end{proof}
\begin{example}\label{ex}
Taking the dynamics of (\ref{log}), set $r=.02$, $\sp^2=.06$, $\mp=.04$, $\sm^2=.01$ and $\mm=.005$.

The general solution to (\ref{odem} )is $Ex^{2}+Fx^{-2}$ and taking $\hat m=1$, the solution $v$ with  $v(1)=1$ and $v'(1)=1$ is $v(x)=\frac{3}{4}x^2+\frac{1}{4}x^{-2}$. Taking $H=2$, we see that $v(H)=\frac{49}{16}$.

Now solving (\ref{odep}) for $v(x,+)$ we get the general solution $Cx^{\frac{2}{3}}+Dx^{-1}$. Imposing the conditions that $v(L,+)=L$ with $L=\frac{1}{8}$, and $V(H,-)=v(H,+)$ we obtain 
$C=\frac{391}{2(64.2^{\frac{2}{3}}-1)}\sim 1.943462$ and $D=\frac{49}{8}-2^{\frac{5}{3}}C\sim -.045108$.

Now solving for $v(M,+)=M$ gives $M$ as the root of $CM^{\frac{5}{3}}+D-M^2=0$, so $M\sim 7.322007$.

Thus, with these values for $L,M$ and $H$ we get $\hat m=1$ and $v$ is as above.
\end{example}

\section{Optimal Timing of Purchase}
\label{section: buying}

 We now consider the optimal time to purchase the stock. 

The gains function $\gain$ is given by $\gain:(x,f)\mapsto \val(x,f)-x$ and our gains process is $\Gain$, given by
\begin{equation}
\Gain_t = e^{-rt}\gain(S_t, F_t).\\
\end{equation}
We seek to find:
$$\s2_t\defto\esssup_{\tau\geq t}\E[\Gain_{\tau}|\F_t].
$$

The optimal stopping problem corresponds to the case when the buyer pays interest (or at least incurs a notional opportunity cost of interest foregone) on the purchase price from the time of purchase and seeks to maximise their profit. 
\begin{remark}
There are other possibilities for the gains in the buying problem, such as proportional reward, where $g=\val/x$. This corresponds to maximising profit per unit of expenditure.
\end{remark}

\begin{theorem}\label{buy}
It is optimal to purchase the shares only when the underlying process is in the positive regime. In this case, there is an optimal level $B\in[L,M)$, given by $B=\argmax_{x\in [L,M]}[\frac{\gain(x,+)}{\psi_+(x)}]$, such that it is optimal to buy if and only if the stock has a price in $[L,B]$
\end{theorem}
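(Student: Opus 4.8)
The plan is to set up the buying problem as an optimal stopping problem for the gains process $\Gain$ and to establish optimality by the same verification scheme, properties (P1)--(P3), that was used for the selling problem, now applied to $\s2$.

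I would first dispose of the negative regime. On the continuation set of the selling problem $\val(\cdot,-)$ solves $\cLm\val-r\val=0$, so the gains satisfy $\cLm\gain(\cdot,-)-r\gain(\cdot,-)=rx-\mm(x)\ge 0$ by \eqref{eq: coeff_cond}; hence $e^{-rt}\gain(S_t,-)$ has non-negative drift and is a submartingale while $F=-$. Consequently stopping in the negative regime is never better than waiting for the flag to turn (or for absorption at $0$, where the gain is $0$), which both proves that it is optimal to buy only in the positive regime and identifies the negative-regime value as the discounted positive-regime value collected at the up-crossing of $H$, namely $\s2(x,-)=\phi_-(x)\,\s2(H,+)$ with $\phi_-$ as in \eqref{phi}.

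For the positive regime the analogous computation gives $\cLp\gain(\cdot,+)-r\gain(\cdot,+)=rx-\mp(x)\le 0$, so $\gain(\cdot,+)$ is $r$-excessive there; the only incentive to delay is that $\gain(\cdot,+)$ decays to $0$ at $M$. I would therefore look for a threshold rule of the form ``buy the first time $S$ falls to a level $m\ge L$''. Since $e^{-rt}\psi_+(S_t)$ is a local martingale, $\psi_+$ is decreasing, and for $x\ge m\ge L$ the passage of $S$ down to $m$ precedes any regime change, one has $\E_{x,+}[e^{-r\tau}]=\psi_+(x)/\psi_+(m)$, so the value of the rule from $x\ge m$ is $\gain(m,+)\psi_+(x)/\psi_+(m)=\big(\gain(m,+)/\psi_+(m)\big)\psi_+(x)$. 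Maximising the constant over $m$ gives $B=\argmax_{x\in[L,M]}\gain(x,+)/\psi_+(x)$ (the range $[L,M]$ being where $\gain(\cdot,+)>0$), together with the candidate $\s2(x,+)=\gain(x,+)$ on $[L,B]$ and $\s2(x,+)=\gain(B,+)\psi_+(x)/\psi_+(B)$ for $x\ge B$.

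It then remains to verify (P1)--(P3) for $\s2_t=e^{-rt}\s2(S_t,F_t)$. Domination of $\Gain$ for $x>B$ is exactly the maximality of $B$, while domination in the negative regime follows because $\phi_-(x)\s2(H,+)-\gain(x,-)$ is $r$-excessive and non-negative at the endpoints. For the supermartingale property I would apply the It\^o--Tanaka formula as in \eqref{super}: the drift vanishes in the continuation region and equals $e^{-rt}(rS_t-\mp(S_t))\,dt\le 0$ where $\s2=\gain$, and the local-time term at $B$ is annihilated precisely by the first-order condition $\frac{d}{dx}\big(\gain(x,+)/\psi_+(x)\big)\big|_{x=B}=0$, which is the smooth-fit of $\s2(\cdot,+)$ at $B$; boundedness of $\s2$ then gives class D, and $\hat\tau=\inf\{t:F_t=+,\,S_t\le B\}$ attains the value. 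The main obstacle is the gluing of the two regimes at the fixed switching levels $L$ and $H$: in contrast to the free boundary $B$, at $L$ and $H$ the generator changes, and I must show that as $F$ flips the candidate neither jumps upward nor picks up a local-time contribution of the wrong sign, so that the two pieces combine into a genuine class D supermartingale. In particular one must reconcile the positive-regime value near $L$ with $\phi_-(x)\s2(H,+)$ and check that under $\hat\tau$ the price is stopped at $B$ before any further regime change; this compatibility is what ultimately pins down $B\in[L,M)$ and the buy set $[L,B]$, and it is the step I expect to require the most care.
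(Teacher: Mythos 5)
Your outline reproduces the paper's own proof almost exactly: the same negative-regime computation $\cLm\gain(\cdot,-)-r\gain(\cdot,-)=rx-\mm(x)\ge 0$ giving the submartingale property, the same candidate ($\vg=\gain$ on $[L,B]$, $\vg(x,+)=\gain(B,+)\psi_+(x)/\psi_+(B)$ above $B$, $\vg(x,-)=\phi_-(x)\vg(H,+)$), the same $B=\argmax_{[L,M]}\gain(\cdot,+)/\psi_+$, and the same verification scheme (P1)--(P3) with smooth fit at $B$; the only cosmetic difference is that you locate $B$ by optimising directly over threshold rules where the paper invokes the strong minimum principle for $\k=\gain(\cdot,+)-\kappa\psi_+$.

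However, the step you flag at the end as requiring the most care --- consistency of the candidate across the flip at $L$ --- is not a loose end that can be tightened: it is a genuine gap, it is equally unaddressed in the paper's proof, and in fact it fails. By the boundary conditions of the selling problem, $\val(L,+)=\val(L,-)$ and $\val(H,-)=\val(H,+)$, so $\gain(L,+)=\gain(L,-)$ and $\gain(H,-)=\gain(H,+)$. On the other hand, your own negative-regime submartingale argument, with optional sampling at $\tau_+$, gives
\begin{equation*}
\gain(L,-)\;\le\;\phi_-(L)\,\gain(H,-)\;=\;\phi_-(L)\,\gain(H,+)\;\le\;\phi_-(L)\,\vg(H,+)\;=\;\vg(L,-),
\end{equation*}
and the first inequality is \emph{strict} whenever $\mm(x)<rx$ (e.g.\ lognormal dynamics with $\mm<r$), because the compensator $\int e^{-rt}(rS_t-\mm(S_t))1_{(\hm,H)}(S_t)\,dt$ accumulates strictly positive expected mass before $\tau_H$. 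Since the state $(L,+)$ flips instantaneously into $(L,-)$, the candidate process $e^{-rt}\vg(S_t,F_t)$ therefore jumps strictly \emph{upward} at the (predictable) first hitting time of $L$ in the positive regime, which no supermartingale can do: (P1) fails, and no smooth-fit or local-time argument can repair it, because the defect is a discontinuity of $\vg$ along trajectories, not a kink. Worse, the same inequality shows the statement itself cannot stand as written: from $(x,+)$ with $x$ slightly above $L$, the rule ``wait for the flip at $L$, then for the return to $H$, then buy optimally'' has value close to $\vg(L,-)>\gain(L,+)=\lim_{y\downarrow L}\gain(y,+)$, so buying just above $L$ is strictly suboptimal. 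The buy region must therefore be of the form $[A,B]$ with $A>L$, with value matching and smooth fit at the extra lower free boundary $A$ and with $\vg(L,+)=\phi_-(L)\vg(H,+)$ as the boundary condition at $L$. In short: your plan coincides with the paper's, and your diagnosis of where the difficulty lies is exactly right, but that difficulty is fatal to the argument (and to the stated form of the buy region) rather than merely delicate.
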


\begin{proof}
It follows from the proof of Theorem \ref{sell2} that, since $\cL^-\val(x,-)-r\val(x,-)=0$ on $(\hm,H)$,
$$
\cL^-\gain(x,-)-r\gain(x,-)=(rx-\mm(x))1_{(\hm,H)}(x)\geq 0,
$$
and $g(x,-)$ is $C^1$ and piecewise $C^2$ on $(0,H)$. We deduce that, defining $\tau_{[H,\infty)}$ as the first time that the stock enters the positive regime, $\Gain_{t\wedge \tau_{[H,\infty)}}$ is a submartingale. Consequently, it is always optimal to continue (i.e. not purchase) when the stock is in the negative regime. 

Now we define  $\rho$ by  setting $\rho=\frac{g(\cdot,+)}{\psi_+}$ and let $B$ be the $\argmax$ of $\rho$ on $[L,M]$. Notice that $B<M$ since $g\geq 0$, $g(M,+)=0$ and  $g(\cdot,+)>0$ on $(L,M)$. 

Now define $\vg$ by 
$$\vg(x,+)=\begin{cases}\gain(x,+): \text{ if }x\leq B\\
\frac{\psi_+(x)}{\psi_+(B)}\gain(B,+):  \text{ if }x\geq B
\end{cases}
$$
and
$$
\vg(x,-)=\phi_-(x)\vg(H,+).
$$
This corresponds to the expected gain from buying at the time stated (since $\phi_-(x)=\E_{x,-}[e^{-r\tau_H}]$ and $\frac{\psi_+(x)}{\psi_+(B)}=\E_{x,+}[e^{-r\tau_{[L,B],+}}]$).

Thus, applying the arguments from the proof of Theorem \ref{sell2}, we see that we need only prove that
\begin{enumerate}[(1)]
\item $\vg\geq \gain$
\item $\cL^+\vg(\cdot,+)-r\vg(\cdot,+)\leq 0 \text{ on }(L,M)$
and
\item $\vg_x(B,+)-\gain_x(B,+)\leq 0$
\end{enumerate}
We have already proved that $\cL^+\val(\cdot,+)-r\val(\cdot,+)= 0 \text{ on }(L,M)$ in the proof of Theorem \ref{sell2}. Thus $\cL^+\gain(\cdot,+)-r\gain(\cdot,+)=rx-\mu_+(x)\leq 0$ on $(L,B)$.
 Property (2) follows, since $\cL^+ \psi_+-r\psi_+=0$. 
 
To establish Property (1), first notice that $u(\cdot,+)=\gain(\cdot,+)$ on $[L,B]$ and $[M,\infty)$. On the interval $[B,M]$, $u(x,+)-\gain(x,+)=\psi_+(x)\frac{\gain(B,+)}{\psi_+(B)}-\gain(x,+)$, which is non-negative by the definition of $B$ and the positivity of $\psi_+$. 

Now define $\di$ by $\di:x\mapsto \vg(x,-)-\gain(x,-)$. Then, since $\cL^-\di-r\di=
-(\cL^-\gain(x,-)-r\gain(x,-))=-(rx-\mm(x))1_{(\hm,H)}(x)\leq 0,$ it follows that $e^{-r(t\wedge \tau_{[H,\infty)})}(\di(S_{t\wedge \tau_{[H,\infty)}})$ is a bounded supermartingale with terminal value $\di(H)1_{\tau_{[H,\infty)}<\infty}$. Since $\di(H)=\vg(H,-)-\gain(H,-)=\vg(H,+)-\gain(H,+)$, and we have already shown that this is non-negative, it follows from the Optional Sampling Theorem that $\di(S_0)=\vg(S_0,-)-\gain(S_0,-)\geq 0$ for each choice of $S_0$ in $[0,H]$.

To establish 
Property (3), take $\kappa\geq 0$ and consider $\k$ given by $\k:x\mapsto \gain(x,+)-\kappa \psi_+(x)$. Since $\psi_+$ satisfies (\ref{odep}) it follows that
$$
\cL^+\k-r\k= rx-\mp(x)\leq 0,
$$
so that $\k$ satisfies the strong minimum principle on $(L,M)$. In particular, defining $\rho$ by  setting $\rho=\frac{g(\cdot,+)}{\psi_+}$ and taking $\kappa=\rho(L)$ we see that $\k(L)=0\geq \k(M)$ and so either $\k$ increases to a unique and strictly positive maximum in $(L,M)$ or it is monotone decreasing on $[L,M]$.

Property (3) then follows from our characterisation of $B$, since $B>L$ implies that $\rho'(B)=0$ which, in turn, implies that $\frac{\psi_+'(B)}{\psi_+(B)}=\frac{\gain_x(B,+)}{\gain(B,+)}$, while $B=L$ implies that $\k$ is monotone decreasing on $[L,M]$ and so $\k'(L)=\gain_x(L,+)-u_x(L,+)\leq 0$.
\end{proof}

\begin{example}
If we return to the set-up of Example \ref{ex}, we see that $\gain(x,+)=Cx^{\frac{2}{3}}+Dx^{-1}-x$, where 
$C$ and $D$ are given there, while $\psi_+(x)=2x^{-1}$. It follows that $\rho(x)=\frac{C}{2}x^{\frac{5}{3}}-\half x^2+\half D$.Differentiating, we see that $\rho$ is maximised at $B=(\frac{5C}{6})^3\sim  4.248001\in (L,M)$ and thus $B>L$.
\end{example}

\section{Concluding remarks}
\label{section: conclusions}
Two other possibilities for modelling the price dynamics in the presence of  support/resistance levels, suggest themselves. One is to use stochastic delay differential equations (SDDEs; \cite{by11}, \cite{longtin10}, \cite{mao97}, \cite{ymy08}). This makes some sense, since TA is the method traders use to forecast dynamics of the future stock price from analysis of historical prices, and SDDEs are stochastic differential equations (SDEs) with coefficients that depend on historical levels. 

The other is to model the stock price (at least locally near the support line) as a skew Brownian motion (see \cite{im65}). Using this process to describe the underlying stock price process under our setup requires less parameters than using SDDEs. However, as the papers of \cite{ns11} and \cite{rossello12} show, the model with skew Brownian motion has arbitrage opportunities, which is undesirable in a financial context.

We hypothesise that our solution could be extended to this latter setting, at least if other dynamics remained the same: so we would assume partial reflection upwards at the level $R\in (L,M)$ when the stock is in the positive regime and partial reflection downwards at $R$ when the stock was in the negative regime. The corresponding dynamics would have generators $\cLp$ and $\cLm$ with scale  measures with a \lq\lq kink'' at $R$---upwards in the case of the positive regime and downwards for the negative regime (see\cite{im65} and \cite{hs81}). 

We do not seek to analyse this case further here.

 \appendix
\section{Appendix}\label{app}
{\em Proof of Lemma \ref{snell}
{\rm Take the joint price and flag process started at $(x,f)$ and consider the corresponding process $\Val_t$. By P1 and the Optional Sampling Theorem for Class D supermartingales,
for any stopping time $\tau$,
$$v(x,f)=\Val_0\geq \E_{x,f}[e^{-r\tau}\val(S_\tau)]\geq \E_{x,f}[e^{-r\tau}S_\tau] \text{ (the last inequality follows by P2)}
$$
and it follows that $\val\geq V$.
Conversely, by P3,
$$
\Val_0=\E_{x,f}[e^{-r\hat\tau}S_{\hat\tau}]
$$
and so $\val\leq V$.
\endpf}

\begin{lemma}\label{dom}
The function $\Val(x,+)\geq x$ for all $x\in [L,M]$ while $\Val(x,-)\geq x$ for all $x\in [0,H]$. 

\end{lemma}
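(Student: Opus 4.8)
The plan is to establish property P2 in the form stated, i.e. that the candidate $g(x,f)\defto\val(x,f)-x$ is nonnegative; the two assertions of the lemma are exactly $g(\cdot,+)\ge0$ on $[L,M]$ and $g(\cdot,-)\ge0$ on $[0,H]$. Below the stop-loss level in the negative regime (for $x\le\hm$) we have $\val(x,-)=x$ by construction, so $g(x,-)=0$ there and nothing needs proving; the content is confined to the continuation regions $[\hm,H]$ (negative regime) and $[L,M]$ (positive regime). The first step is to record which equations $g$ solves. Since $\val(\cdot,+)$ solves \eqref{odep}, $\val(\cdot,-)$ solves \eqref{odem}, and $(\cLp-r)x=\mp(x)-rx$, $(\cLm-r)x=\mm(x)-rx$, one obtains
\[
(\cLp-r)g(\cdot,+)=rx-\mp(x)\le0 \qquad\text{and}\qquad (\cLm-r)g(\cdot,-)=rx-\mm(x)\ge0,
\]
the signs coming directly from \eqref{eq: coeff_cond}. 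Thus $g(\cdot,+)$ is a \emph{supersolution} and $g(\cdot,-)$ a \emph{subsolution} of operators whose zeroth-order coefficient $-r$ is nonpositive.

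Second, I would dispatch the positive regime by the (strong) minimum principle, e.g. \cite{ps07}: a supersolution on $[L,M]$ of an operator with zeroth-order term $-r\le0$ satisfies $\min g(\cdot,+)\ge\min\{0,\,g(L,+),\,g(M,+)\}$. In case C1 the conditions \eqref{bound1p} give $g(L,+)=g(M,+)=0$, so $g(\cdot,+)\ge0$ immediately. In cases C2 and C3 the conditions \eqref{bound2p} give $g(M,+)=0$ and $g(L,+)=g(L,-)$, so I would first need $g(L,-)\ge0$; this forces the two regimes to be treated in the order ``negative, then positive'', after which the same minimum principle closes the positive regime.

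The hard part is the negative regime, and it is genuinely \emph{not} covered by a bare maximum principle: because $g(\cdot,-)$ is a subsolution, the maximum principle controls it only from above, which is the wrong direction for a lower bound, and a subsolution with vanishing boundary data can indeed dip negative in the interior. The extra ingredient I would exploit is the Cauchy data furnished by smooth pasting: by the characterisation of $\hm$ in Theorem \ref{sell} we have both $g(\hm,-)=0$ and $\frac{\partial g}{\partial x}(\hm,-)=\frac{\partial\val}{\partial x}(\hm,-)-1=0$. I would then run a first-crossing (comparison) argument on $w\defto\frac{\partial}{\partial x}g(\cdot,-)$. Evaluating \eqref{odem} at $\hm$ gives $\tfrac12\sm^2(\hm)\,w'(\hm)=r\hm-\mm(\hm)>0$, so $w>0$ just to the right of $\hm$; and at any $x_1>\hm$ with $w(x_1)=0$ while $g(x_1,-)>0$, the same equation forces $\tfrac12\sm^2(x_1)\,w'(x_1)=r x_1-\mm(x_1)+r\,g(x_1,-)>0$, so $w$ cannot cross from positive to negative. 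Hence $w\ge0$ throughout $[\hm,H]$, i.e. $\frac{\partial\val}{\partial x}(\cdot,-)\ge1$, and integrating from $\hm$ yields $g(\cdot,-)\ge g(\hm,-)=0$. In case C3 this monotonicity is immediate, the defining property of that case being precisely $\frac{\partial\val}{\partial x}\ge1$, together with $g(0,-)=0$ via \eqref{disp} and $\phi_-(0)=0$.

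Finally I would assemble the cases: the negative-regime monotonicity gives $g(\cdot,-)\ge0$ on $[\hm,H]$ in all of C1--C3, in particular $g(H,-)\ge0$ and (in C2, C3) $g(L,-)\ge0$; feeding these into the positive-regime boundary data \eqref{bound1m}, \eqref{bound2p}, \eqref{bound2m} and invoking the minimum principle yields $g(\cdot,+)\ge0$ on $[L,M]$, completing the proof. The one delicate point is the degenerate possibility $\mm(\hm)=r\hm$, for which $w'(\hm)=0$ and the first-crossing argument does not start cleanly; I expect this to be handled routinely by passing to the next non-vanishing derivative of $g$ at $\hm$ or by a perturbation argument, and it does not arise in the lognormal specialisation \eqref{log}, where $\mm(\hm)=\mm\hm<r\hm$ strictly for $\hm>0$.
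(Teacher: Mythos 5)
Your skeleton coincides with the paper's: both treat the negative regime first, establishing that $g(\cdot,-)$ is nondecreasing on $[\hm,H]$ (hence $\geq g(\hm,-)=0$), and then close the positive regime with the strong minimum principle applied to the supersolution $g(\cdot,+)$, using $g(L,+)=0$ in case C1 and $g(L,+)=g(L,-)\geq 0$ in cases C2--C3. Where you genuinely differ is the mechanism for the negative-regime monotonicity. The paper perturbs: it sets $\g(x)=g(x,-)+\epsilon x^{3/2}$, checks that $\cLm\g-r\g\geq(1-\tfrac{3}{2}\epsilon x^{1/2})\,(rx-\mm(x))\geq 0$ for $0<\epsilon<\tfrac{2}{3}M^{-1/2}$, invokes the strong maximum principle (a subsolution has no positive interior maximum), notes that $\g$ is initially strictly increasing thanks to the $\epsilon$-term together with smooth pasting, concludes that $\g$ is nondecreasing on $[\hm,H]$, and finally lets $\epsilon\to 0$. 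You instead run a first-crossing argument directly on $w=g'(\cdot,-)$, reading the ODE pointwise at $\hm$ and at a putative first zero of $w$; in the nondegenerate situation $r\hm-\mm(\hm)>0$ this is correct, more elementary (no maximum principle is needed in the negative regime), and it isolates exactly the same role of smooth pasting as the paper's proof.

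There are, however, two genuine gaps. First, the degenerate case $\mm(\hm)=r\hm$, which you flag yourself, is not excluded by \eqref{eq: coeff_cond} (the standing hypothesis is only $\mm(x)\leq rx$), and neither of your suggested repairs is routine: passing to higher derivatives of $g$ at $\hm$ requires smoothness the paper does not assume (the coefficients are merely H\"older, and $rx-\mm(x)$ may vanish on a set with empty interior without vanishing to finite order), while ``a perturbation argument'' is not a patch on your proof but is, in effect, the paper's entire proof --- the whole point of the $\epsilon x^{3/2}$ term is that it manufactures a strict initial increase and a genuine subsolution without requiring strictness of $rx-\mm(x)$ anywhere. Second, and more seriously, your case C3 is circular. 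The defining property of C3 in Theorem \ref{sell} is a statement about the family $\val_2(\cdot,-;m)$ for $m\in(0,L)$ (each has slope greater than $1$ at $x=m$), not about $\val_3$; the inequality $\frac{\partial\val_3}{\partial x}\geq 1$ is announced in Section 2.5 as something that \emph{will be shown}, and the place where it gets shown is precisely this lemma, so you cannot cite it as a hypothesis. Moreover your first-crossing argument cannot even be started at $\hm=0$: there is no smooth-pasting condition there, so the Cauchy datum $w(0)=0$ that launches the argument is unavailable ($g(0,-)=0$ via \eqref{disp} alone is not enough). Case C3 therefore needs an independent argument, which is exactly what the paper's perturbation supplies (it replaces the missing derivative condition at $0$ by the blow-up of the second derivative of $\epsilon x^{3/2}$ there).
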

\begin{proof}
As indicated in section 2, the main tool here is the strong maximum/minimum principle (see \cite{Friedman} or \cite{ps07}).
First, define $g(x,f)\defto \Val(x,f)-x$. Then 
$$
\cL^+ g(x,+)-rg(x,+)=rx-\mu_+(x)\leq 0 \text{ on }[L,M].
$$
with $g(M,+)=0$ and $g(L,+)=g(L,-)$.
Now the strong minimum principle tells us that $g(\cdot,+)$ has no negative minimum on $(L,M)$, so to show that $g(\cdot,+)$ is 
non-negative on $[L,M]$ it is sufficient to show that $g(L,-)\geq 0$. This is immediate in case C1, since in this case $g(L,-)=0$. It 
remains to show that 
\begin{equation}\label{max}
g(x,-)\geq 0\text{ on } [\hat m, H].
\end{equation}

We now define $\phi$ by $\phi(x)=rx-\mu_-(x)$, define $\g:[\hat m,H]\rightarrow \R$ by $\g:x\mapsto g(x,-)+\epsilon x^{\frac{3}{2}}$, and note that
$\cL^-\g-r\g=\phi(x) +\epsilon\bigl(\frac{3}{8} \sigma^2_-(x) x^{-\frac{1}{2}}+\half r x^{\frac{3}{2}}-{\frac{3}{2}}\phi(x) x^{\frac{1}{2}}\bigr)\geq (1-{\frac{3}{2}}\epsilon  x^{\frac{1}{2}})\phi(x).$
Thus, taking $0<\epsilon<\frac{2}{3}M^{-\frac{1}{2}}$, $\cL^-\g-r\g\geq 0$ on $(\hat m,H)$, so, by the strong maximum principle, $\g$ has no positive maximum on $(\hat m, H)$.
 Moreover, $\g(\hat m)=\epsilon \hat m^{\frac{3}{2}}>0$ and 
 $\frac{\partial\g}{\partial x}|_{x=\hat m}=\epsilon \hat m^{\frac{1}{2}}>0$ unless $\hat m=0$ in which case $\g(0)\geq 0$, $\g'(0)=0$ and $\frac{\partial^2\g}{\partial x^2}|_{x=\hat m}=\infty$. In either case, $\g$ is initially {\em strictly} increasing and non-negative
so must be monotone increasing on $[\hat m,H]$. We conclude that for each positive $\epsilon$, $\g$ is non-negative and monotone increasing on $[\hat m,H]$ and so, taking the limit as $\epsilon\rightarrow 0$ we conclude that $g(x,-)$ is increasing on $[\hat m,H]$. This establishes (\ref{max})
\end{proof}

\end{document}